\documentclass[conference,letterpaper]{IEEEtran}
\IEEEoverridecommandlockouts

\usepackage{cite}
\usepackage{url}
\usepackage{ifthen}
\usepackage{algorithm, algorithmicx, algpseudocode}
\usepackage{graphicx} 
\usepackage{textcomp}
\usepackage{booktabs}
\usepackage[cmex10]{amsmath}
\usepackage{amssymb,amsfonts}
\usepackage{pifont}
\usepackage{amsthm}
\usepackage{mathrsfs}
\def\BibTeX{{\rm B\kern-.05em{\sc i\kern-.025em b}\kern-.08em T\kern-.1667em\lower.7ex\hbox{E}\kern-.125emX}}
\usepackage{tikz}
\usetikzlibrary{automata,positioning,calc}
\usetikzlibrary{intersections}
\usetikzlibrary{decorations.pathreplacing,angles,quotes}

\usepackage{bm}
\usepackage{amscd}

\algnewcommand{\Initialize}[1]{%
  \State \textbf{Initialization:}
  \Statex \hspace*{\algorithmicindent}\parbox[t]{0.8\linewidth}{\raggedright #1}
}
\makeatletter

\theoremstyle{definition}

\newtheorem{definition}{Definition}

\newtheorem{prop}{Proposition}

\newtheorem{remark}{Remark}
\newtheorem{eg}{Example}

\newcommand{\abs}[1]{\left\lvert#1\right\rvert}

\newcommand{\one}[1]{\mbox{1}\hspace{-0.25em}\mbox{l}_{\left\{#1\right\}}}
\newcommand{\argmax}{\operatornamewithlimits{argmax}}
\newcommand{\argmin}{\operatornamewithlimits{argmin}}

\def\vE{\mathbb E}

\def\vV{\mathbb V}

\font\b=cmr10 scaled\magstep4

\def\bigzerou{\smash{\lower1.7ex\hbox{\b 0}}}
\def\bigzerou{\smash{\lower1.7ex\hbox{\b 0}}}

\begin{document}
\title{A Generalized Information Bottleneck Method: A Decision-Theoretic Perspective \\
\thanks{This work was supported by JSPS KAKENHI Grant Number JP23K16886 and JST CRONOS Grant Number JPMJCS25N5.}
}

\author{
\IEEEauthorblockN{Akira Kamatsuka}
\IEEEauthorblockA{Shonan Institute of Technology \\ 
Email: \text{kamatsuka@info.shonan-it.ac.jp}
 }
\and 
\IEEEauthorblockN{Takahiro Yoshida}
\IEEEauthorblockA{Nihon University \\ 
Email: \text{yoshida.takahiro@nihon-u.ac.jp}
 }
}
\maketitle
\begin{abstract}
The information bottleneck (IB) method seeks a compressed representation of data that preserves information relevant to a target variable for prediction while discarding irrelevant information from the original data.
In its classical formulation, the IB method employs mutual information to evaluate the compression between the original and compressed data and the utility of the representation for the target variable. 
In this study, we investigate a generalized IB problem, where the evaluation of utility is based on the $\mathcal{H}$-mutual information that satisfies the concave (\texttt{CV}) and averaging (\texttt{AVG}) conditions.
This class of information measures admits a statistical decision-theoretic interpretation via its equivalence to the expected value of sample information.
Based on this interpretation, we derive an alternating optimization algorithm to assess the tradeoff between compression and utility in the generalized IB problem.
\end{abstract}

\section{Introduction} \label{sec:intro}

The information bottleneck (IB) method, which was originally proposed by Tishby \textit{et al.} \cite{tishby99information}, provides a principled framework for extracting a compressed representation $T$ from data $X$ that is relevant to a target variable $Y$. 
Its objective is to preserve as much information about $Y$ as possible while discarding information in $X$ that is irrelevant for prediction.
Since its inception, the IB framework has been extensively studied and employed in a wide range of problems in machine learning, including clustering \cite{10.1145/345508.345578,NIPS1999_be3e9d3f,6788460} and deep learning \cite{DBLP:journals/corr/AlemiFD016,7133169,shwartzziv2017openingblackboxdeep} (see \cite{2024IBSurvey} for a comprehensive survey).
Additionally, from an information-theoretic perspective, it is closely related to coding problems such as remote source coding \cite{1057738} and chief executive officer source coding under logarithmic loss \cite{6651793}, the Wyner--Ahlswede--K\"orner problem \cite{1055374,1055469}, and a hypothesis testing against independence \cite{1057194}.

A central feature of the IB framework is the tradeoff between compression and predictive performance.
In the original formulation \cite{tishby99information}, this tradeoff is characterized by the Shannon mutual information (MI) $I(X;T)$ as a measure of compression and $I(Y;T)$ as a measure of predictive utility.
An extended Arimoto--Blahut algorithm was initially proposed to determine the corresponding tradeoff curve.
Subsequently, several generalized IB formulations were introduced, where more general measures replaced these information measures.
Hsu \textit{et al.} \cite{8437632} and Shahab and Flavio \cite{e22111325} considered formulations based on the Arimoto MI \cite{arimoto1977} and the $f$-information \cite{1055015}. 
Mimura \cite{mimura_shannon_en} proposed a formulation based on the $f$-information for both the compression and utility.

Despite these developments, the operational meaning of adopting a particular generalized utility measure in the IB framework remains unclear.
In practical applications, the compressed representation $T$ of data is typically used by a decision-maker to select an action; therefore, its usefulness should be evaluated in terms of decision performance rather than solely in terms of abstract information quantities.

Based on this observation, we adopt a class of information measures known as $\mathcal{H}$-mutual information ($\mathcal{H}$-MI) \cite{9505206} that serves as the utility measure in the IB framework.
This class includes information quantities of the form 
$I_{\mathcal{H}}(Y;T)=\mathcal{H}(Y)-\mathcal{H}(Y|T)$,
where $\mathcal{H}$ is an entropy-like functional.
We focus on the subclass that satisfies the concave (\texttt{CV}) and averaging (\texttt{AVG}) conditions introduced by Alvim \textit{et al.} \cite{ALVIM201932}; these conditions ensure fundamental properties such as nonnegativity and data-processing inequality (DPI).
Importantly, this subclass coincides with the \textit{expected value of sample information} (EVSI) in statistical decision theory \cite{raiffa1961applied}.
Specifically, each $\mathcal{H}$-MI admits a representation as an EVSI associated with an appropriate decision problem \cite{AkiraKAMATSUKA20252024TAP0010}.
Therefore, adopting $\mathcal{H}$-MI as the utility measure provides the IB framework with an explicit decision-theoretic interpretation: the usefulness of $T$ is evaluated in terms of a decision rule $\delta\colon \mathcal{T}\to\mathcal{A}$, the resulting action $A=\delta(T)$, and a loss function $\ell(y,a)$, as shown in Figure \ref{fig:system_model_action_IBM}.

However, although this interpretation clarifies the meaning of utility, it does not provide a systematic method for evaluating the tradeoff between compression and utility. 
In particular, the existing generalized IB formulations do not explicitly exploit the decision-theoretic structure of $\mathcal{H}$-MI in the design of optimization algorithms, 
leading to the following question: how can the decision-theoretic interpretation induced by $\mathcal{H}$-MI be incorporated directly into both the formulation and the algorithmic design of the IB problem?

In this study, we attempt to answer this question.
The main contributions of our work are summarized as follows:
\begin{itemize}
\item We formulate a generalized IB problem, where the utility of the compressed representation is measured by $\mathcal{H}$-MI that satisfies the \texttt{CV} and \texttt{AVG} conditions, and the compression measure remains the Shannon MI.
\item By exploiting the equivalence between this class of $\mathcal{H}$-MI and the EVSI, we derive an alternating optimization algorithm to determine the compression--utility trade-off curve.
\end{itemize}

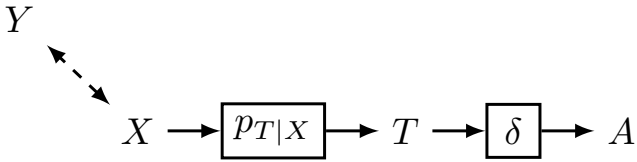
\begin{figure}[tb]
\centering
\resizebox{0.5\textwidth}{!}{
\begin{tikzpicture}[auto]
\tikzset{block/.style={draw, rectangle, minimum height = .5cm, minimum width = .5cm, text centered, thick}};
\node (database) {$X$}; 
\node[above left=.8cm of database] (label) {$Y$};
\node[block,right=.5cm of database] (mechanism) {${p_{T\mid X}}$};
\node[right=.5cm of mechanism] (output) {$T$};
\node[block, right=.5cm of output] (d_bob) {$\delta$};
\node[right=.5cm of d_bob] (T_hat) {$A$};

\draw[<->, thick, >=latex, dashed] (database) --  (label);
\draw[->, thick, >=latex] (database) --  (mechanism);
\draw[->, thick, >=latex] (mechanism) -- (output);
\draw[->, thick, >=latex] (output) -- (d_bob);
\draw[->, thick, >=latex] (d_bob) -- (T_hat);
\end{tikzpicture}
}
\caption{A system model for the information bottleneck problem incorporating a decision-maker.}
\label{fig:system_model_action_IBM}
\end{figure}

\section{Preliminaries}\label{sec:preliminary}
Let $Y$ and $X$ be random variables representing the prediction target and associated data, respectively, where $(X,Y)\sim p_{X,Y}$; the joint distribution $p_{X,Y}$ is assumed to be known.
Let $T$ denote a compressed representation and $p_{T\mid X}$ be a stochastic mechanism that maps $X$ to $T$.
We assume that $Y\to X\to T$ form a Markov chain in this order.
The alphabets of $X$, $Y$, and $T$, which are denoted as $\mathcal{X}$, $\mathcal{Y}$, and $\mathcal{T}$, respectively, are assumed to be finite.
Let $H(Y):=-\sum_{y}p_Y(y)\log p_Y(y)$ and $H(Y | T):=-\sum_{t}p_T(t)\sum_{y}p_{Y \mid T}(y | t)\log p_{Y\mid T}(y | t)$ denote
the entropy of $Y$ and the conditional entropy of $Y$ given $T$, respectively. We define the mutual information between $Y$ and $T$ as
$I(Y;T):=H(Y)-H(Y | T)$.
Furthermore, for probability distributions $p$ and $q$, the Kullback--Leibler divergence is defined as 
$D(p || q):=\sum_{z}p(z)\log\frac{p(z)}{q(z)}$.
Throughout this study, $\log$ denotes the natural logarithm.
Let $\mathcal{A}$ denote a decision space and $\delta\colon\mathcal{T}\to\mathcal{A}$ be a decision rule of a decision-maker.
We define the random variable $A:=\delta(T)$ as the action taken by the decision-maker.
Let $\ell(y, a)$ be a loss function. We define the Bayes risk of $\delta$ as
$r(\delta):=\vE_{Y,T}[\ell(Y,\delta(T))]$, where $\vE_{Y, T}[\cdot]$ denotes the expectation under $p_{Y, T}$.
The set of all probability distributions on $\mathcal{Y}$, which is identified with the $(|\mathcal{Y}|-1)$-dimensional probability simplex, is denoted as $\Delta_{\mathcal{Y}}$.

In this section, we initially review a fundamental property of Shannon MI.
Then, we introduce EVSI, which forms a class of information measures in statistical decision theory. 
Finally, we discuss $\mathcal{H}$-MI, which is a broad class of information measures including both Shannon MI and EVSI.

\subsection{A Property of Shannon MI and EVSI}

Shannon MI $I(X; T)$ admits the following variational characterization regarding the Kullback--Leibler divergence.

\begin{prop}[\text{\cite[Lemma 10.8.1]{Cover:2006:EIT:1146355}}]\label{prop:vc_Shannon_MI}
Let $(X,T)\sim p_{X,T}$. Then,
\begin{align}
I(X;T)
&= \min_{q_T} D(p_X p_{T\mid X}\|p_X q_T),
\end{align}
where the minimum is achieved when $q_T=p_T$.
\end{prop}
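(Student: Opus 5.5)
The plan is to prove the identity by decomposing the divergence for an arbitrary $q_T$ into $I(X;T)$ plus a nonnegative term. Fix any distribution $q_T$ on $\mathcal{T}$. Without loss of generality we take $q_T(t)>0$ whenever $p_T(t)>0$. Otherwise, some $t$ with $p_{X,T}(x,t)>0$ has $q_T(t)=0$, so $D(p_Xp_{T\mid X}\|p_Xq_T)=+\infty$ and such $q_T$ cannot be a minimizer.

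Writing $p_{X,T}=p_Xp_{T\mid X}$, we expand
\begin{align}
D(p_X p_{T\mid X}\|p_X q_T)
&=\sum_{x,t}p_{X,T}(x,t)\log\frac{p_{T\mid X}(t\mid x)}{q_T(t)} \notag\\
&=\sum_{x,t}p_{X,T}(x,t)\log\frac{p_{T\mid X}(t\mid x)}{p_T(t)}+\sum_{x,t}p_{X,T}(x,t)\log\frac{p_T(t)}{q_T(t)}.
\end{align}
This is done by inserting the factor $p_T(t)/p_T(t)$ inside the logarithm. The first sum is $I(X;T)$. Since $p_{T\mid X}/p_T=p_{X\mid T}/p_X$, this follows from the paper's definition $I=H-H(\cdot\mid\cdot)$, applied with the roles of the variables as in the preliminaries and using the symmetry of mutual information. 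In the second sum, marginalizing over $x$ gives $\sum_t p_T(t)\log\frac{p_T(t)}{q_T(t)}=D(p_T\|q_T)$. Hence
\begin{align}
D(p_X p_{T\mid X}\|p_X q_T)=I(X;T)+D(p_T\|q_T).
\end{align}

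The final step invokes the nonnegativity of the Kullback--Leibler divergence (Gibbs' inequality, via Jensen applied to the concave $\log$), with equality if and only if $q_T=p_T$. This yields $D(p_Xp_{T\mid X}\|p_Xq_T)\ge I(X;T)$ for every $q_T$, with equality at $q_T=p_T$. So the minimum exists, equals $I(X;T)$, and is attained (uniquely) at $p_T$.

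There is no real obstacle here. The only care needed is the bookkeeping of zero-probability terms, with the conventions $0\log 0=0$ and $0\log(0/q)=0$, and the support issue handled at the outset. Identifying the first sum with $I(X;T)$ as defined in the paper is a routine rearrangement of entropies.
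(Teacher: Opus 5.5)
Your proof is correct and takes essentially the same route as the paper. The paper gives no proof of its own and simply cites Cover and Thomas, Lemma 10.8.1, whose argument is exactly your decomposition $D(p_Xp_{T\mid X}\|p_Xq_T)=I(X;T)+D(p_T\|q_T)$ followed by the nonnegativity of the Kullback--Leibler divergence, with your support bookkeeping and uniqueness of the minimizer handled correctly.
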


Before introducing EVSI, we review a basic result obtained from statistical decision theory.

\begin{prop}[\text{\cite[Thm. 2.7]{ghosh2007introduction}}]
\begin{align}
\min_{\delta} r(\delta)
&= \vE_T\!\left[\min_{a}\vE_Y[\ell(Y,a)\mid T]\right]. \label{eq:sdt_basic}
\end{align}
The decision rule that achieves the minimum in \eqref{eq:sdt_basic}, which is referred to as the \emph{Bayes decision rule}, is given as 
\begin{align}
\delta^{*}(t)
&= \argmin_{a}\vE_Y[\ell(Y,a)\mid T=t]. \label{eq:Bayes_decision_rule}
\end{align}
\end{prop}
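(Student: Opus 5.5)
The plan is to prove both inequalities in \eqref{eq:sdt_basic} and read off the optimality of \eqref{eq:Bayes_decision_rule} from the equality case. All alphabets are finite, so every expectation is a finite sum. To avoid existence issues we assume that, for each $t$, the minimum over $a\in\mathcal{A}$ of $\vE_Y[\ell(Y,a)\mid T=t]$ is attained; this holds, for example, if $\mathcal{A}$ is finite, or if $\mathcal{A}$ is compact and $\ell(y,\cdot)$ is lower semicontinuous. Otherwise one replaces min by inf, and the same argument goes through with $\epsilon$-optimal actions.

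\textbf{Step 1 (iterated expectation).} For an arbitrary decision rule $\delta\colon\mathcal{T}\to\mathcal{A}$, condition on $T$ using the tower property:
\begin{align}
r(\delta)=\sum_{t}p_T(t)\sum_{y}p_{Y\mid T}(y\mid t)\,\ell(y,\delta(t))=\vE_T\!\left[\vE_Y[\ell(Y,\delta(T))\mid T]\right].
\end{align}
Summands with $p_T(t)=0$ contribute nothing, so we may restrict attention to $t$ with $p_T(t)>0$.

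\textbf{Step 2 (lower bound).} For each fixed $t$, the action $\delta(t)$ is one particular element of $\mathcal{A}$. Hence
\begin{align}
\vE_Y[\ell(Y,\delta(t))\mid T=t]\ge\min_{a}\vE_Y[\ell(Y,a)\mid T=t].
\end{align}
Averaging this over $p_T$ gives $r(\delta)\ge \vE_T[\min_a \vE_Y[\ell(Y,a)\mid T]]$. Since $\delta$ was arbitrary, the same bound holds for the minimum over all $\delta$.

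\textbf{Step 3 (achievability).} Define $\delta^*(t)$ pointwise as a minimizer of $\vE_Y[\ell(Y,a)\mid T=t]$, breaking ties arbitrarily; this is exactly \eqref{eq:Bayes_decision_rule}. Because $\mathcal{T}$ is finite, this pointwise choice is a legitimate function, and no measurability issue arises. Substituting $\delta^*$ into Step 1 makes every inner term equal to its minimum, so $r(\delta^*)$ equals the right-hand side of \eqref{eq:sdt_basic}. Combining this with Step 2 shows that the minimum over $\delta$ is attained and is achieved by $\delta^*$.

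The main obstacle is not the inequality itself, which is immediate. It is justifying the interchange of minimization and expectation. Two facts carry this: the decision rule may choose its action separately for each value of $t$, and the attainment assumption on $\min_a$ guarantees that a pointwise minimizer exists.
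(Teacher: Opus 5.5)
Your proof is correct and follows essentially the same approach as the source: the paper gives no proof of its own and only cites Ghosh et al.\ (Thm.~2.7), whose argument is exactly this one. There the Bayes risk is rewritten as an average of posterior expected losses (tower property/Fubini) and the posterior expected loss is minimized separately for each observed $t$; your explicit handling of $t$ with $p_T(t)=0$ and of attainment of $\min_a$ (which the statement's use of $\min$ tacitly presupposes) is appropriate.
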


\begin{eg}[Point estimation of $Y$]
Let us consider the problem of point estimation of $Y$, which is denoted as $A=\hat{Y}$, under the squared-error loss
$\ell_{\mathrm{sq}}(y,\hat{y})=(y-\hat{y})^{2}$.
Then,
\begin{align}
\min_{\delta} r(\delta)
&= \vE_{T}[\vV(Y\mid T)],
\end{align}
where $\vV(Y | T)$ denotes the conditional variance; the Bayes decision rule is given as 
$\delta^{*}(t)=\vE_Y[Y | T=t]$.
Under the $0$--$1$ loss
$\ell_{0\text{-}1}(y,\hat{y})=1-\one{\{y\}}(\hat{y})$,
we obtain
\begin{align}
\min_{\delta} r(\delta)
&= 1-\vE_T\!\left[\max_{y}p_{Y\mid T}(y\mid T)\right],
\end{align}
and the Bayes decision rule is given as 
$\delta^{*}(t)=\argmax_{y}p_{Y\mid T}(y\mid t)$, i.e., the maximum a posteriori estimator.
\end{eg}

\begin{eg}[Decision on the probability distribution of $Y$]
Let us consider the problem of deciding a probability distribution of $Y$, which is denoted as $A=r\in\Delta_{\mathcal{Y}}$.
Under the logarithmic loss $\ell_{\mathrm{log}}(y,r):=-\log r(y)$, we obtain 
\begin{align}
\min_{\delta} r(\delta)
&= H(Y\mid T),
\end{align}
and the Bayes decision rule is given as $\delta^{*}(t)=p_{Y\mid T}(\cdot\mid t)$.
Under the $\alpha$-loss \cite{8804205} $\ell_{\alpha}(y,r):=\frac{\alpha}{\alpha-1}\left(1-r(y)^{1-\frac{1}{\alpha}}\right)$, the optimal Bayes risk is given as
\begin{align}
\min_{\delta} r(\delta)
&= \frac{\alpha}{\alpha-1}
\left(
1-\exp\left\{\frac{1-\alpha}{\alpha}H_{\alpha}^{\mathrm{A}}(Y\mid T)\right\}
\right),
\end{align}
where
\begin{align}
H_{\alpha}^{\mathrm{A}}(Y\mid T)
:=\frac{\alpha}{1-\alpha}\log
\sum_{t}p_T(t)\!\left(\sum_{y}p_{Y\mid T}(y\mid t)^{\alpha}\right)^{\frac{1}{\alpha}}
\end{align}
denotes the Arimoto conditional entropy \cite{arimoto1977}; the optimal decision rule is given as 
$\delta^{*}(t)=\frac{p_{Y\mid T}(\cdot\mid t)^{\alpha}}{\sum_{y}p_{Y\mid T}(y\mid t)^{\alpha}}$ \cite[Lemma 1]{8804205}.
\end{eg}

In the problem of deciding the probability distribution of $Y$, i.e., $\mathcal{A}=\Delta_{\mathcal{Y}}$, the loss function $\ell(y,r)$ is often referred to as a \emph{scoring rule} \cite{doi:10.1198/016214506000001437,Dawid:2014ua}.
Among the scoring rules, those that satisfy the following condition are referred to as \emph{proper scoring rules} (PSRs).

\begin{definition}[Proper scoring rule]
Let $\mathcal{A} = \Delta_{\mathcal{Y}}$. 
A scoring rule $\ell(y,r)$ is a PSR if, for all $r\in\Delta_{\mathcal{Y}}$,
\begin{align}
\vE_Y[\ell(Y,p_Y)]
\leq \vE_Y[\ell(Y,r)]. \label{eq:PSR}
\end{align}
\end{definition}

\begin{eg}
$\ell_{\mathrm{log}}(y,r)$ is a PSR, whereas $\alpha$-loss $\ell_{\alpha}(y,r)$ is not.
\end{eg}

\begin{remark}\label{rem:optimal_dist_psr}
For a decision problem on the probability distribution of $Y$ with a PSR $\ell(y,r)$, the Bayes decision rule is given by
\begin{align}
\text{EVSI}^{\ell}(Y;T)\leq \text{EVPI}^{\ell}(Y).
\end{align}
\end{remark}

Next, we introduce EVSI \cite{raiffa1961applied}.
EVSI is defined as the difference between the optimal Bayes risk with access to $T$ and that with no access to $T$\footnote{In \cite{raiffa1961applied}, EVSI was originally defined in terms of a utility function $u(y, a)$ rather than a loss function $\ell(y, a)$.}.

\begin{definition}[EVSI {\cite[Eq.(4-35)]{raiffa1961applied}}]
Let $(Y,T)\sim p_{Y,T}=p_Y p_{T\mid Y}$ and $\ell(y,a)$ be a loss function.
The EVSI of $T$ on $Y$ is defined as
\begin{align}
&\text{EVSI}^{\ell}(Y;T)\notag \\ 
&:= \min_{a}\vE_Y[\ell(Y,a)]
   - \min_{\delta}\vE_{Y,T}[\ell(Y,\delta(T))] \notag\\
&= \min_{a}\vE_Y[\ell(Y,a)]
   - \vE_T\!\left[\min_{a}\vE_Y[\ell(Y,a)\mid T]\right].
\end{align}
\end{definition}

\begin{eg}
For $A=\hat{Y}$ and $\ell_{\mathrm{sq}}(y,\hat{y})=(y-\hat{y})^{2}$,
\begin{align}
\text{EVSI}^{\ell_{\mathrm{sq}}}(Y;T)
= \vV(Y)-\vE_T[\vV(Y\mid T)].
\end{align}
\end{eg}

\begin{eg}
For $A=r\in\Delta_{\mathcal{Y}}$ and $\ell_{\mathrm{log}}(y,r)=-\log r(y)$,
\begin{align}
\text{EVSI}^{\ell_{\mathrm{log}}}(Y;T)
= H(Y)-H(Y\mid T)
= I(Y;T).
\end{align}
\end{eg}

The EVSI is upper-bounded by the expected value of perfect information (EVPI), which is defined below.

\begin{definition}[EVPI {\cite[Eq.(5-8)]{raiffa1961applied}}]
The EVPI of $Y$ is defined as
\begin{align}
\text{EVPI}^{\ell}(Y)
:= \min_{a}\vE_Y[\ell(Y,a)]
 - \vE_Y[\min_{a}\ell(Y,a)].
\end{align}
\end{definition}

\begin{prop}\label{prop:evsi_leq_evpi}
\begin{align}
\text{EVSI}^{\ell}(Y;T)\leq \text{EVPI}^{\ell}(Y).
\end{align}
\end{prop}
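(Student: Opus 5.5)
The inequality involves only the two defining expressions, so I will compare them term by term. Both $\text{EVSI}^{\ell}(Y;T)$ and $\text{EVPI}^{\ell}(Y)$ begin with the same quantity $\min_{a}\vE_Y[\ell(Y,a)]$. After cancelling it, the claim is equivalent to
\begin{align}
\vE_Y\!\left[\min_{a}\ell(Y,a)\right] \leq \vE_T\!\left[\min_{a}\vE_Y[\ell(Y,a)\mid T]\right],
\end{align}
which says that the Bayes risk attainable with perfect knowledge of $Y$ is no larger than the optimal Bayes risk with access to $T$ only.

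To prove this reduced inequality, I fix $t$ and any action $a\in\mathcal{A}$. Pointwise in $y$ we have $\min_{a'}\ell(y,a')\leq \ell(y,a)$. Taking the conditional expectation under $p_{Y\mid T}(\cdot\mid t)$ preserves this inequality by monotonicity, giving $\vE_Y[\min_{a'}\ell(Y,a')\mid T=t]\leq \vE_Y[\ell(Y,a)\mid T=t]$. The left-hand side does not depend on $a$, so minimizing the right-hand side over $a$ yields
\begin{align}
\vE_Y\!\left[\min_{a'}\ell(Y,a')\,\middle|\, T=t\right]\leq \min_{a}\vE_Y[\ell(Y,a)\mid T=t].
\end{align}
Averaging over $t\sim p_T$ and using the tower property, $\vE_T\vE_Y[\cdot\mid T]=\vE_Y[\cdot]$, turns the left side into $\vE_Y[\min_{a}\ell(Y,a)]$, which is exactly the reduced inequality.

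An equivalent route uses Proposition~2: for any decision rule $\delta$ we have $\ell(Y,\delta(T))\geq \min_a \ell(Y,a)$ almost surely. Taking expectations and then minimizing over $\delta$ gives the same bound.

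The only delicate point is that all the minima must exist, or else be replaced by infima. The paper writes $\min$ throughout, so I will assume they are attained, as in the Bayes decision rule of Proposition~2. Should they not be attained, the same argument goes through verbatim with infima. Finiteness of $\mathcal{Y}$ and $\mathcal{T}$ means every expectation is a finite sum, so exchanging the sums is harmless. I do not expect any step to be a genuine obstacle.
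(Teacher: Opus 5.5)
Your proof is correct and is the argument the paper has in mind when it says the result ``follows directly from the definitions'': cancel the common term $\min_{a}\vE_Y[\ell(Y,a)]$, then bound $\vE_Y[\min_{a}\ell(Y,a)]$ above by the optimal Bayes risk achievable with $T$. Your write-up spells out what the paper leaves implicit, namely the pointwise comparison, the minimization over $a$ for each $t$, and the tower property.
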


\begin{proof}
This follows directly from the definitions.
\end{proof}

\begin{remark}
Let us assume that the loss function $\ell(y,a)$ is nonnegative and satisfies the following condition:
for every $y\in \mathcal{Y}$, there exists $a\in \mathcal{A}$ such that $\ell(y,a)=0$.
Then,
\begin{align}
\text{EVPI}^{\ell}(Y)
= \min_{a}\vE_Y[\ell(Y,a)].
\end{align}
\end{remark}

\begin{eg}
Let $\mathcal{A}=\Delta_{\mathcal{Y}}$.
Considering that the logarithmic loss $\ell_{\mathrm{log}}(y,r)=-\log r(y)$ is nonnegative and satisfies the above condition, we obtain
\begin{align}
\text{EVPI}^{\ell_{\mathrm{log}}}(Y)
= H(Y).
\end{align}
\end{eg}

\subsection{Definition and Properties of $\mathcal{H}$-MI}
Next, we introduce the $\mathcal{H}$-MI and the conditions under which it satisfies the nonnegativity and DPI.

\begin{definition}[$\mathcal{H}$-MI {\cite[Eq.(5)]{9505206}}]
Let $\mathcal{H}(Y)=\mathcal{H}(p_Y)$ be a continuous functional of $p_Y$ and $\mathcal{H}(Y | T)=\mathcal{H}(p_Y,p_{T\mid Y})$ be a functional of $(p_Y,p_{T\mid Y})$.
$\mathcal{H}$-MI is defined as
\begin{align}
I_{\mathcal{H}}(Y;T) &:= \mathcal{H}(Y)-\mathcal{H}(Y\mid T).
\end{align}
\end{definition}

\begin{definition}[\texttt{CV} and \texttt{AVG} conditions]
Functional $\mathcal{H}(Y)$ satisfies the \texttt{CV} condition if it is concave in $p_Y$.
Functional $\mathcal{H}(Y | T)$ satisfies the \texttt{AVG} condition if it can be expressed as
\begin{align}
\mathcal{H}(Y\mid T)
&= \vE_T[\mathcal{H}(p_{Y\mid T}(\cdot\mid T))].
\end{align}
\end{definition}

Alvim \textit{et al.} \cite{ALVIM201932} showed that, under the \texttt{CV} condition, the \texttt{AVG} condition is equivalent to both the nonnegativity of $I_{\mathcal{H}}(Y;T)$ and the DPI.
\begin{prop}[\text{\cite[Props. 14, 15, 16, and 18]{ALVIM201932}}]
For $\mathcal{H}$-MI $I_{\mathcal{H}}(Y;T)=\mathcal{H}(Y)-\mathcal{H}(Y\mid T)$ with $\mathcal{H}(Y)$ satisfying the \texttt{CV} condition, the following are equivalent:
\begin{enumerate}
\item $\mathcal{H}(Y | T)$ satisfies the \texttt{AVG} condition;
\item $I_{\mathcal{H}}(Y;T)\geq 0$ (nonnegativity);
\item $I_{\mathcal{H}}(Y;T)$ satisfies the DPI, i.e., for any Markov chain $Y\to T\to W$, 
\begin{align}
I_{\mathcal{H}}(Y;T)\leq I_{\mathcal{H}}(Y;W). 
\end{align}
\end{enumerate}
\end{prop}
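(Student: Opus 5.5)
We prove the cycle of implications $1\Rightarrow 3\Rightarrow 2\Rightarrow 1$. Throughout, write $\pi:=p_Y$ and $\pi_t:=p_{Y\mid T}(\cdot\mid t)$, so that $\pi=\sum_t p_T(t)\pi_t$.

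\emph{Step 1: \texttt{AVG} $\Rightarrow$ DPI.} Assume \texttt{AVG} and let $Y\to T\to W$ be a Markov chain. By the Markov property, $p_{Y\mid W}(\cdot\mid w)=\sum_t p_{T\mid W}(t\mid w)\,\pi_t$. Concavity of $\mathcal{H}$ (the \texttt{CV} condition) then gives
\begin{align}
\mathcal{H}(p_{Y\mid W}(\cdot\mid w))\ge \sum_t p_{T\mid W}(t\mid w)\,\mathcal{H}(\pi_t).
\end{align}
Averaging over $w\sim p_W$ and using \texttt{AVG} yields $\mathcal{H}(Y\mid W)\ge \mathcal{H}(Y\mid T)$, i.e., $I_{\mathcal{H}}(Y;T)\ge I_{\mathcal{H}}(Y;W)$. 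This is the reading of the displayed DPI under which the proposition is meaningful, with processing reducing information.

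\emph{Step 2: DPI $\Rightarrow$ nonnegativity.} Take $W$ to be constant, which is trivially a Markov chain $Y\to T\to W$. Then $\mathcal{H}(Y\mid W)$ should equal $\mathcal{H}(\pi)$, so $I_{\mathcal{H}}(Y;W)=0$ and DPI gives $I_{\mathcal{H}}(Y;T)\ge 0$. This requires the mild normalization that $\mathcal{H}(Y\mid W)=\mathcal{H}(Y)$ for independent $W$; Alvim \textit{et al.} build this into their axioms, and I would state it explicitly.

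\emph{Step 3: nonnegativity $\Rightarrow$ \texttt{AVG}.} This is the main obstacle. Nonnegativity alone only gives $\mathcal{H}(Y\mid T)\le\mathcal{H}(Y)$, so recovering the averaged form requires using the structural assumptions of \cite{ALVIM201932} on $\mathcal{H}(Y\mid T)$ as a functional of the joint distribution. I would follow their Props. 14--18. First, apply nonnegativity to a refined channel whose output $(T,T')$ splits each $t$ into auxiliary channels, which yields the inequalities $\mathcal{H}(Y\mid T)\le \mathcal{H}(\pi_t)$-type bounds conditionally. Then use the reverse construction, merging outputs, to obtain the opposite inequality. Together these force $\mathcal{H}(Y\mid T)=\sum_t p_T(t)\mathcal{H}(\pi_t)$. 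Since the paper cites these propositions, the proof can defer the technical details of this direction to \cite{ALVIM201932} while presenting Steps 1 and 2 explicitly.
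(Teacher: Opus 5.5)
The paper gives no proof here, only the citation to Alvim et al., so the question is whether your cycle closes. Steps 1 and 2 are correct. Both caveats you raise are genuinely needed. First, the displayed DPI must be read as $I_{\mathcal{H}}(Y;T)\ge I_{\mathcal{H}}(Y;W)$. Second, Step 2 needs $\mathcal{H}(Y\mid W)=\mathcal{H}(Y)$ for $W$ independent of $Y$, and the stated hypotheses do not supply this. For example, $\mathcal{H}(Y\mid T):=\vE_T[\mathcal{H}(p_{Y\mid T}(\cdot\mid T))]+1$ satisfies DPI by your Step 1, yet gives $I_{\mathcal{H}}(Y;T)=-1$ for constant $T$.

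The genuine gap is Step 3, and no splitting/merging construction can fill it. The implication nonnegativity $\Rightarrow$ \texttt{AVG} is false under the stated hypotheses, even after adding DPI and the normalization, so the statement as written is itself false.

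Take a concave $\mathcal{H}$ (say Shannon entropy) and set $\mathcal{H}(Y\mid T):=\min_{t:\,p_T(t)>0}\mathcal{H}(p_{Y\mid T}(\cdot\mid t))$, which is a legitimate functional of $(p_Y,p_{T\mid Y})$. It satisfies every hypothesis except \texttt{AVG}:
\begin{itemize}
\item Nonnegativity: concavity gives $\mathcal{H}(p_Y)\ge\sum_t p_T(t)\mathcal{H}(p_{Y\mid T}(\cdot\mid t))\ge\mathcal{H}(Y\mid T)$.
\item DPI: for $Y\to T\to W$, each $p_{Y\mid W}(\cdot\mid w)$ is a mixture of posteriors $p_{Y\mid T}(\cdot\mid t)$ with $p_T(t)>0$. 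The same bound then gives $\mathcal{H}(Y\mid W)\ge\mathcal{H}(Y\mid T)$.
\item Normalization: an independent $T$ gives $\mathcal{H}(Y\mid T)=\mathcal{H}(Y)$.
\end{itemize}
But with two equally likely posteriors $(1,0)$ and $(1/2,1/2)$, the minimum is $0$ while the average is $\frac{1}{2}\log 2$, so \texttt{AVG} fails. This is exactly the MAX-type posterior studied by Alvim et al.

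The equivalence that does hold, and that their propositions establish, has the roles of the two conditions swapped. With \texttt{AVG} as the standing hypothesis, which also gives your normalization for free, \texttt{CV} $\Leftrightarrow$ nonnegativity $\Leftrightarrow$ DPI. Your Steps 1 and 2 are two links of that chain. The missing link, nonnegativity $\Rightarrow$ \texttt{CV}, is short:
\begin{itemize}
\item Given $\pi_1,\pi_2\in\Delta_{\mathcal{Y}}$ and $\lambda\in[0,1]$, take the prior $\lambda\pi_1+(1-\lambda)\pi_2$.
\item Choose a channel with $p_T=(\lambda,1-\lambda)$ and posteriors $\pi_1,\pi_2$.
\item Nonnegativity combined with \texttt{AVG} then reads $\mathcal{H}(\lambda\pi_1+(1-\lambda)\pi_2)\ge\lambda\mathcal{H}(\pi_1)+(1-\lambda)\mathcal{H}(\pi_2)$.
\end{itemize}
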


\begin{remark}
Americo \textit{et al.} \cite{9064819} extended this result using the core-concave (\texttt{CCV}) and $\eta$-averaging (\texttt{EAVG}) conditions.
Note that their results were stated in terms of conditions on $\mathcal{H}(Y | T)$ rather than on $\mathcal{H}$-MI itself.
\end{remark}

\begin{eg}
Both the Shannon MI $I(Y;T)$ and EVSI induced by a given statistical decision problem ($\mathrm{EVSI}^{\ell}(Y;T)$) are $\mathcal{H}$-MIs that satisfy the \texttt{CV} and \texttt{AVG} conditions (see \cite[Sec. V.F]{9064819}).
\end{eg}

Conversely, Kamatsuka \textit{et al.} \cite[Thm. 2]{AkiraKAMATSUKA20252024TAP0010} proved that any $\mathcal{H}$-MI that satisfies the \texttt{CV} and \texttt{AVG} conditions can be represented as an EVSI for an appropriate statistical decision problem of deciding a probability distribution on $\mathcal{Y}$, i.e., $\mathcal{A}=\Delta_{\mathcal{Y}}$.
This is because any concave function on $\Delta_{\mathcal{Y}}$ can be represented as the minimum expected loss of a suitable decision problem of deciding a probability distribution \cite[Sec. 3.5.4]{grunwald2004}\footnote{Technically, this result has been proved under the \texttt{CCV} and \texttt{EAVG} conditions.}.
Consequently, $\mathcal{H}$-MI admits the variational representation regarding the reverse channel $r_{Y\mid T}$ presented below.

\begin{prop}[\text{\cite[Thm. 2]{AkiraKAMATSUKA20252024TAP0010}}]\label{prop:vc_H_MI}
Let $\mathcal{A}=\Delta_{\mathcal{Y}}$ and $I_{\mathcal{H}}(Y;T)=\mathcal{H}(Y)-\mathcal{H}(Y\mid T)$ be an $\mathcal{H}$-MI that satisfies the \texttt{CV} and \texttt{AVG} conditions.
Then, there exists a PSR $\ell_{\mathcal{H}}(y,r)$ such that
\begin{align}
I_{\mathcal{H}}(Y;T)
&= \text{EVSI}^{\ell_{\mathcal{H}}}(Y;T) \notag\\
&= \max_{r_{Y\mid T}} \mathcal{F}_{\mathcal{H}}(p_{T\mid Y}, r_{Y\mid T}),
\end{align}
where
\begin{align}
\mathcal{F}_{\mathcal{H}}(p_{T\mid Y}, r_{Y\mid T})
:= \mathcal{H}(Y)
 - \vE_{Y,T}[\ell_{\mathcal{H}}(Y,r_{Y\mid T}(\cdot\mid T))].
\label{eq:vc_H_MI}
\end{align}
\end{prop}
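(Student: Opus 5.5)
The plan is to construct the PSR explicitly from supergradients of the concave functional $\mathcal{H}$ on $\Delta_{\mathcal{Y}}$. Then we verify, in order, properness, the EVSI identity, and the variational formula.

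\textbf{Construction.} For each $r\in\Delta_{\mathcal{Y}}$, pick a supergradient $g_r$ of $\mathcal{H}$ at $r$; it exists by the \texttt{CV} condition, at least in the relative interior. Define
\begin{align}
\ell_{\mathcal{H}}(y,r):=\mathcal{H}(r)+g_r(y)-\sum_{y'}r(y')g_r(y').
\end{align}
This is the standard Savage/Gneiting--Raftery form; \cite[Sec. 3.5.4]{grunwald2004} gives exactly this construction.

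\textbf{Properness.} For any $p\in\Delta_{\mathcal{Y}}$ we have $\vE_{Y\sim p}[\ell_{\mathcal{H}}(Y,r)]=\mathcal{H}(r)+\langle g_r,p-r\rangle$. By concavity this is at least $\mathcal{H}(p)$, with equality at $r=p$. Hence $\min_r\vE_{Y\sim p}[\ell_{\mathcal{H}}(Y,r)]=\mathcal{H}(p)$, the minimum is attained at $r=p$, and $\ell_{\mathcal{H}}$ is a PSR.

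\textbf{EVSI identity.} Apply the previous step with $p=p_Y$: the first term of $\text{EVSI}^{\ell_{\mathcal{H}}}(Y;T)$ equals $\mathcal{H}(Y)$. Apply it conditionally with $p=p_{Y\mid T}(\cdot\mid t)$ for each $t$, and use \eqref{eq:sdt_basic}: the second term equals $\vE_T[\mathcal{H}(p_{Y\mid T}(\cdot\mid T))]$. By the \texttt{AVG} condition this is $\mathcal{H}(Y\mid T)$, so $\text{EVSI}^{\ell_{\mathcal{H}}}(Y;T)=I_{\mathcal{H}}(Y;T)$.

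\textbf{Variational formula.} Take any reverse channel $r_{Y\mid T}$. Rewrite
\begin{align}
\vE_{Y,T}[\ell_{\mathcal{H}}(Y,r_{Y\mid T}(\cdot\mid T))]=\sum_t p_T(t)\,\vE_{Y\sim p_{Y\mid T}(\cdot\mid t)}[\ell_{\mathcal{H}}(Y,r_{Y\mid T}(\cdot\mid t))].
\end{align}
Minimize each summand separately, using properness. This gives $\mathcal{F}_{\mathcal{H}}\le I_{\mathcal{H}}(Y;T)$, with equality at $r_{Y\mid T}=p_{Y\mid T}$, where $p_{Y\mid T}$ is computed from $p_Y p_{T\mid Y}$. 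Hence the maximum exists and equals $I_{\mathcal{H}}(Y;T)$.

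\textbf{Main obstacle.} The hard part is the boundary of the simplex. There $\mathcal{H}$ may have no finite supergradient (e.g., Shannon entropy), so $\ell_{\mathcal{H}}$ can take the value $+\infty$. I would first allow extended-real-valued losses, as in the log-loss case, by taking $g_r(y)=-\infty$ for $y\notin\operatorname{supp} r$. I would then check that the expectation identities still hold with the convention $0\cdot\infty=0$. Failing that, I would approximate $\mathcal{H}$ on the boundary by directional limits, using the continuity of $\mathcal{H}$ assumed in the definition of $\mathcal{H}$-MI.
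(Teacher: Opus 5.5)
Your construction is the paper's own: \eqref{eq:l_H}, taken from Gr\"unwald--Dawid (to which the paper defers along with the cited theorem), is exactly your $\ell_{\mathcal{H}}$ with $z=g_r$. Your three verifications are correct at every $r$ that admits a finite supergradient: properness from the supergradient inequality, the EVSI identity via \eqref{eq:sdt_basic} and \texttt{AVG}, and the termwise maximization over $r_{Y\mid T}(\cdot\mid t)$.

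The gap is the boundary patch, which you rightly call the crux but which fails as written. The boundary cannot be avoided, for three reasons. With $\mathcal{A}=\Delta_{\mathcal{Y}}$ the loss must be defined at boundary $r$. The minimization over $r$ ranges over those points, so a bad definition there breaks properness. And the posteriors $p_{Y\mid T}(\cdot\mid t)$, where the maximum must be attained, often lie on the boundary. Your choice $g_r(y)=-\infty$ for $y\notin\operatorname{supp} r$ gives $\ell_{\mathcal{H}}(y,r)=-\infty$. Then $\min_r\vE_{Y\sim p}[\ell_{\mathcal{H}}(Y,r)]=-\infty$ for every $p$ (take $r$ a vertex missing a point of $\operatorname{supp}p$). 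This destroys properness and turns the EVSI into the undefined $-\infty-(-\infty)$. The supergradient inequality $\mathcal{H}(p)\le\mathcal{H}(r)+\langle g_r,p-r\rangle$ forces the opposite sign. When $\mathcal{H}$ grows superlinearly as mass moves onto an unsupported coordinate (Shannon entropy grows like $-\epsilon\log\epsilon$), no finite value of $g_r(y)$ works. The only admissible choice is $g_r(y)=+\infty$, which is what produces $\ell_{\log}(y,r)=+\infty$.

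A correct patch goes as follows. Let $S=\operatorname{supp} r$, set $g_r(y)=+\infty$ for $y\notin S$, and let $g_r|_S$ be a supergradient at $r$ of the restriction of $\mathcal{H}$ to the face $\Delta_S=\{p\in\Delta_{\mathcal{Y}}:\operatorname{supp}p\subseteq S\}$. This supergradient exists because $\mathcal{H}$ is finite and concave on $\Delta_S$ and $r$ lies in the relative interior of $\Delta_S$ (Rockafellar, \emph{Convex Analysis}, Thm.~23.4). Then $\ell_{\mathcal{H}}(\cdot,r)$ is finite on $S$ and $+\infty$ off it. For $\operatorname{supp}p\subseteq S$ you get $\vE_{Y\sim p}[\ell_{\mathcal{H}}(Y,r)]=\mathcal{H}(r)+\langle g_r|_S,p-r\rangle\ge\mathcal{H}(p)$, and otherwise the expectation is $+\infty$; equality holds at $p=r$. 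Moreover $p_{Y\mid T}(y\mid t)=0$ whenever $\ell_{\mathcal{H}}(y,p_{Y\mid T}(\cdot\mid t))=+\infty$. Hence all your expectation identities go through with $0\cdot\infty=0$, and your argument is complete. Your ``directional limits'' fallback is unnecessary, and as stated it does not actually define the loss at boundary points. Note that the paper's remark is silent on this point too: it simply writes $z\in\partial\mathcal{H}(r)$.
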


\begin{remark}
Let $\mathcal{Y}=\{y_{1},\dots,y_{m}\}$ and let $\mathcal{H}(p_Y)$ be a continuous concave functional.
Based on \cite[Sec. 3.5.4]{grunwald2004}, a possible choice of $\ell_{\mathcal{H}}(y,r)$ in Proposition \ref{prop:vc_H_MI} is
\begin{align}
\ell_{\mathcal{H}}(y,r)
:= \mathcal{H}(r) + z^{\top}(\mbox{1}\hspace{-0.25em}\mbox{l}^{y}-r),
\label{eq:l_H}
\end{align}
where $\mbox{1}\hspace{-0.25em}\mbox{l}^{y}$ is an $m$-dimensional vector, where its $y$-th component is $1$ and all its other components are $0$; 
$z\in\partial\mathcal{H}(r)$ is a subgradient.
If $\mathcal{H}(p_{Y})$ is differentiable, then $\partial\mathcal{H}(r)=\{\nabla\mathcal{H}(r)\}$.
\end{remark}

\begin{eg}\label{eg:entropy}
If $\mathcal{H}(p_Y)=H(p_Y)=-\sum_{y}p_Y(y)\log p_Y(y)$ and
$\ell_{\mathcal{H}}(y,r)=\ell_{\mathrm{log}}(y,r)$,
then
\begin{align}
I(Y;T)
&= \text{EVSI}^{\ell_{\mathcal{H}}}(Y;T) \\
&= \max_{r_{Y\mid T}}
   \vE_{Y,T}\!\left[\log\frac{r_{Y\mid T}(Y\mid T)}{p_Y(Y)}\right].
\end{align}
\end{eg}

\begin{eg}\label{eg:variance}
If $\mathcal{H}(p_Y)=\vV(Y)=\vE[(Y-\vE[Y])^{2}]$ and
$\ell_{\mathcal{H}}(y,r)=\left(y-\sum_{y}r(y)y\right)^{2}$, then
\begin{align}
&\text{EVSI}^{\ell_{\mathrm{sq}}}(Y;T) = \text{EVSI}^{\ell_{\mathcal{H}}}(Y;T) \\ 
&= \max_{r_{Y\mid T}}
   \left\{
   \vV(Y)
   - \vE_{Y,T}[\ell_{\mathcal{H}}(Y,r_{Y\mid T}(\cdot\mid T))]
   \right\}.
\end{align}
\end{eg}

\section{A Generalized Information Bottleneck Problem} \label{sec:SDT_IBM}

In this section, we formulate a generalized IB problem, where the Shannon MI $I(X;T)$ is adopted as the compression measure and the utility is measured using an $\mathcal{H}$-MI 
that satisfies the \texttt{CV} and \texttt{AVG} conditions.
Throughout this section, we denote $I(X;T)=I(p_X,p_{T\mid X})$ and $I_{\mathcal{H}}(Y;T)=I_{\mathcal{H}}(p_Y,p_{T\mid Y})$, where $p_{T\mid Y}$ is induced by $p_{T\mid X}$ as follows:
\begin{align}
p_{T\mid Y}(t\mid y)
&:=\sum_{x}p_{T\mid X}(t\mid x)\,p_{X\mid Y}(x\mid y).
\end{align}

\subsection{Formulation of the Generalized IB problem}

The tradeoff between compression and utility can be formulated as the constrained optimization problem described below.
\begin{definition}
\begin{align}
\text{IB}_{\mathcal{H}}(R)
:= \min_{p_{T\mid X}\,:\, I_{\mathcal{H}}(p_Y,p_{T\mid Y})\ge R}
I(p_X,p_{T\mid X}).
\label{eq:generalized_IB}
\end{align}
\end{definition}

By introducing a Lagrange multiplier $\beta\in[0,\infty]$, the problem of finding a minimizer of \eqref{eq:generalized_IB} can be reduced to the following unconstrained optimization problem:
\begin{align}
\min_{p_{T\mid X}}
\left\{
I(p_X,p_{T\mid X})
-\beta\, I_{\mathcal{H}}(p_Y,p_{T\mid Y})
\right\}.
\label{eq:generalized_IB_optimization}
\end{align}

\begin{remark}
When $I_{\mathcal{H}}(p_Y,p_{T\mid Y})=I(p_Y,p_{T\mid Y})$ (Shannon MI), \eqref{eq:generalized_IB} reduces to the original IB problem \cite{tishby99information}.
\end{remark}

\subsection{Derivation of an Alternating Optimization Algorithm}

We derive an alternating optimization algorithm to solve \eqref{eq:generalized_IB_optimization}.
Using the results in Section \ref{sec:preliminary}, \eqref{eq:generalized_IB_optimization} is reduced to a triple minimization problem as follows:

\begin{align}
&\min_{p_{T\mid X}}
\left\{
I(p_X,p_{T\mid X})
-\beta\, I_{\mathcal{H}}(p_Y,p_{T\mid Y})
\right\} \notag\\
&\overset{(a)}{=}
\min_{p_{T\mid X}}
\left\{
\min_{q_T} D(p_X p_{T\mid X}\|p_X q_T)
-\beta \max_{r_{Y\mid T}}\mathcal{F}_{\mathcal{H}}(p_{T\mid Y},r_{Y\mid T})
\right\} \notag\\
&=
\min_{p_{T\mid X}}\min_{q_T}\min_{r_{Y\mid T}}
\left\{
D(p_X p_{T\mid X}\|p_X q_T)
-\beta\, \mathcal{F}_{\mathcal{H}}(p_{T\mid Y},r_{Y\mid T})
\right\},
\end{align}

where $(a)$ follows from Propositions \ref{prop:vc_Shannon_MI} and \ref{prop:vc_H_MI}.
Fixing two of the three distributions and minimizing over the remaining one yields update rules described below.

\begin{prop}\label{prop:update_formulae_SDT_IBM}
Let $\beta\in[0,\infty]$ be a Lagrange multiplier and define
\begin{align}
&G_{\beta}(p_{T\mid X},q_T,r_{Y\mid T}) \notag \\ 
&:= D(p_X p_{T\mid X}\|p_X q_T)
-\beta\, \mathcal{F}_{\mathcal{H}}(p_{T\mid Y},r_{Y\mid T}).
\end{align}
Then:
\begin{enumerate}
\item
For a fixed $(p_{T\mid X},q_T)$, $G_{\beta}(p_{T\mid X},q_T,r_{Y\mid T})$ is minimized using 

\begin{align}
r_{Y\mid T}^{*}(y\mid t)
&= p_{Y\mid T}(y\mid t) \notag\\
&= \sum_{x} p_{Y\mid X}(y\mid x)\cdot \frac{p_X(x)p_{T\mid X}(t\mid x)}{\sum_{x} p_X(x)p_{T\mid X}(t\mid x)}. \label{eq:update_r_Y_T}
\end{align}

\item
For a fixed $(p_{T\mid X},r_{Y\mid T})$, $G_{\beta}(p_{T\mid X},q_T,r_{Y\mid T})$ is minimized using 
\begin{align}
q_T^{*}(t)
&= p_T(t)
= \sum_{x} p_X(x)\,p_{T\mid X}(t\mid x). \label{eq:update_q_T}
\end{align}
\item
For a fixed $(q_T,r_{Y\mid T})$, $G_{\beta}(p_{T\mid X},q_T,r_{Y\mid T})$ is minimized using 

\begin{align}
&p_{T\mid X}^{*}(t\mid x) \notag \\ 
&=
\frac{q_T(t)\exp\left\{-\beta \sum_{y}p_{Y\mid X}(y | x)\ell_{\mathcal{H}}\bigl(y,r_{Y\mid T}(\cdot | t)\bigr)\right\}}
{\sum_{t} q_T(t)\exp\left\{-\beta \sum_{y}p_{Y\mid X}(y | x)\ell_{\mathcal{H}}\bigl(y,r_{Y\mid T}(\cdot | t)\bigr)\right\}}.
\label{eq:update_p_T_X}
\end{align}
\end{enumerate}
\end{prop}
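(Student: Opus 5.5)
We handle the three claims separately; each is a minimization of $G_{\beta}$ over one argument with the other two held fixed.

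\textbf{Parts 1 and 2.} For part 1, only the term $-\beta\,\mathcal{F}_{\mathcal{H}}(p_{T\mid Y},r_{Y\mid T})$ depends on $r_{Y\mid T}$. Since $\beta\ge 0$, minimizing $G_{\beta}$ is the same as maximizing $\mathcal{F}_{\mathcal{H}}$, which by \eqref{eq:vc_H_MI} amounts to minimizing $\vE_{Y,T}[\ell_{\mathcal{H}}(Y,r_{Y\mid T}(\cdot\mid T))]$. Condition on $T=t$. The objective becomes $\sum_t p_T(t)\,\vE_Y[\ell_{\mathcal{H}}(Y,r_{Y\mid T}(\cdot\mid t))\mid T=t]$, and the inner terms can be minimized separately for each $t$. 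Because $\ell_{\mathcal{H}}$ is a PSR (Proposition~\ref{prop:vc_H_MI}), applying \eqref{eq:PSR} with $p_Y$ replaced by $p_{Y\mid T}(\cdot\mid t)$ shows that $r_{Y\mid T}(\cdot\mid t)=p_{Y\mid T}(\cdot\mid t)$ is optimal. The explicit formula \eqref{eq:update_r_Y_T} then follows from Bayes' rule together with the Markov chain $Y\to X\to T$, which gives $p_{Y\mid X,T}=p_{Y\mid X}$.

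For part 2, only the divergence term depends on $q_T$. Expand
\[
D(p_X p_{T\mid X}\|p_X q_T)=I(X;T)+D(p_T\|q_T).
\]
This is exactly the identity behind Proposition~\ref{prop:vc_Shannon_MI}, so the minimizer is $q_T=p_T$.

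\textbf{Part 3.} First rewrite $G_{\beta}$ explicitly in terms of $p_{T\mid X}$. The term $\mathcal{H}(Y)$ does not depend on $p_{T\mid X}$. By the Markov chain,
\[
\vE_{Y,T}[\ell_{\mathcal{H}}(Y,r(\cdot\mid T))]=\sum_{x}p_X(x)\sum_t p_{T\mid X}(t\mid x)\,d(x,t),
\]
where $d(x,t):=\sum_y p_{Y\mid X}(y\mid x)\,\ell_{\mathcal{H}}(y,r_{Y\mid T}(\cdot\mid t))$. Hence, up to an additive constant,
\[
G_{\beta}=\sum_x p_X(x)\sum_t p_{T\mid X}(t\mid x)\left[\log\frac{p_{T\mid X}(t\mid x)}{q_T(t)}+\beta\, d(x,t)\right].
\]
This is linear in the distortion term and convex in $p_{T\mid X}$, and it separates across $x$.

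For each $x$, define the tilted distribution $w_x(t):=q_T(t)e^{-\beta d(x,t)}/Z(x)$, where $Z(x)$ is the normalizing constant. The bracketed sum then equals $D(p_{T\mid X}(\cdot\mid x)\|w_x)-\log Z(x)$. This is minimized exactly when $p_{T\mid X}(\cdot\mid x)=w_x$, which is \eqref{eq:update_p_T_X}. Using this Gibbs variational identity avoids Lagrange multipliers and KKT arguments.

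\textbf{Main obstacle: boundary and degenerate cases.} These are:
\begin{itemize}
\item $\beta=\infty$, where we interpret the update as a limit, or note that the formula concentrates on $\argmin_t d(x,t)$;
\item values of $t$ with $p_T(t)=0$ in part 1, where $r_{Y\mid T}(\cdot\mid t)$ is arbitrary;
\item $\ell_{\mathcal{H}}$ taking the value $+\infty$, as the log loss does, which requires the convention $e^{-\infty}=0$.
\end{itemize}
We will state these conventions explicitly. For finite $\beta$, all three steps are exact.
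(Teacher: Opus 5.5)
Your proposal is correct and takes essentially the same route as the paper. Part 1 applies the PSR property conditionally on each $t$, part 2 uses the variational characterization of $I(X;T)$ via $D(p_Xp_{T\mid X}\|p_Xq_T)=I(X;T)+D(p_T\|q_T)$, and part 3 uses the Gibbs-tilted distribution with nonnegativity of the KL divergence. Your decomposition with $-\log Z(x)$ has the correct sign; the paper's displayed $+\sum_x p_X(x)\log Z_{\beta}$ is a typo, and it does not affect the minimizer.
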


\begin{proof}
See Appendix \ref{proof:update_formulae_SDT_IBM}.
\end{proof}

Based on Proposition \ref{prop:update_formulae_SDT_IBM}, we derive an alternating optimization algorithm (Algorithm \ref{alg:AO_SDT_IBM}) to calculate
$\min_{p_{T\mid X}}\min_{q_T}\min_{r_{Y\mid T}}G_{\beta}(p_{T\mid X},q_T,r_{Y\mid T})$, 
where $p_{T\mid X}^{(0)}$ denotes an initial distribution.

\begin{algorithm}[h]
	\caption{Alternating optimization algorithm for solving the generalized IB problem}
	\label{alg:AO_SDT_IBM}
	\begin{algorithmic}[1]
		\Require 
			\Statex $p_{X, Y}, \beta \in [0, \infty]$, $\epsilon\in (0, 1), p_{T\mid X}^{(0)}$
		\Ensure
			\Statex Approximate value of \\ $\min_{p_{T\mid X}}\min_{q_{T}}\min_{r_{Y\mid T}}G_{\beta}(p_{T\mid X}, q_{T}, r_{Y\mid T})$
		\Initialize{
			$q_{T}^{(0)}(t) \gets \sum_{x}p_{X}(x)p_{T\mid X}^{(0)}(t\mid x)$ \\ 
			$r_{Y\mid T}^{(0)} \gets \sum_{x}p_{Y\mid X}(y\mid x) \cdot \frac{p_{X}(x)p_{T\mid X}^{(0)}(t\mid x)}{q_{T}^{(0)}(t)}$ \\ 
			$G^{(0)}\gets G_{\beta}(p_{T\mid X}^{(0)}, q_{T}^{(0)}, r_{Y\mid T}^{(0)})$ \\ 
			$k\gets 0$ \\
      }
		\Repeat
			\State $k\gets k+1$
			\State $p_{T\mid X}^{(k)}(t | x) \gets \frac{q_{T}^{(k)}(t)\exp\{-\beta \sum_{y}p_{Y\mid X}(y\mid x)\ell_{\mathcal{H}}(y, r_{Y\mid T}^{(k)}(\cdot \mid t))\}}{\sum_{t}q_{T}^{(k)}(t)\exp\{-\beta \sum_{y}p_{Y\mid X}(y\mid x)\ell_{\mathcal{H}}(y, r_{Y\mid T}^{(k)}(\cdot \mid t))\}}$ 
			\State $q_{T}^{(k)}(t) \gets \sum_{x}p_{X}(x)p_{T\mid X}^{(k)}(t\mid x)$  
			\State $r_{Y\mid T}^{(k)} \gets \sum_{x}p_{Y\mid X}(y\mid x) \cdot \frac{p_{X}(x)p_{T\mid X}^{(k)}(t\mid x)}{q_{T}^{(k)}(t)}$  
			\State $G^{(k)}\gets G_{\beta}(p_{T\mid X}^{(k)}, q_{T}^{(k)}, r_{Y\mid T}^{(k)})$
		\Until{$\abs{G^{(k)} - G^{(k-1)}} < \epsilon$} \Comment{stopping condition}
		\State \textbf{return} $G^{(k)}$
	\end{algorithmic}
\end{algorithm}

\begin{remark}
When $I_{\mathcal{H}}(Y;T)=I(Y;T)=\mathrm{EVSI}^{\ell_{\mathrm{log}}}(Y;T)$,
the update rules for $r_{Y\mid T}$ and $q_T$ remain as described in \eqref{eq:update_r_Y_T} and \eqref{eq:update_q_T}, respectively.
The update rule for $p_{T\mid X}$ becomes
\begin{align}
&p_{T\mid X}^{*}(t\mid x) \notag \\
&=
\frac{
q_T(t)\exp\left\{-\beta \sum_y p_{Y\mid X}(y\mid x)\,
\ell_{\mathrm{log}}\bigl(y,r_{Y\mid T}(\cdot\mid t)\bigr)\right\}
}{
\sum_{t} q_T(t)\exp\left\{-\beta \sum_y p_{Y\mid X}(y\mid x)\,
\ell_{\mathrm{log}}\bigl(y,r_{Y\mid T}(\cdot\mid t)\bigr)\right\}
} \notag\\
&=
\frac{
q_T(t)\exp\left\{\beta \sum_y p_{Y\mid X}(y\mid x)\log r_{Y\mid T}(y\mid t)\right\}
}{
\sum_{t} q_T(t)\exp\left\{\beta \sum_y p_{Y\mid X}(y\mid x)\log r_{Y\mid T}(y\mid t)\right\}
} \notag\\
&=
\frac{
q_T(t)\exp\left\{-\beta D\left(p_{Y\mid X}(\cdot\mid x)\,\|\,r_{Y\mid T}(\cdot\mid t)\right)\right\}
}{
\sum_{t} q_T(t)\exp\left\{-\beta D\left(p_{Y\mid X}(\cdot\mid x) || r_{Y\mid T}(\cdot\mid t)\right)\right\}
}.
\end{align}
This coincides with the update rule in the original IB problem \cite{tishby99information}.
\end{remark}

\subsection{Discussion on the Convergence of the Alternating Optimization Algorithm}
\label{ssec:convergence}

In this section, we discuss the convergence of Algorithm \ref{alg:AO_SDT_IBM}.
Let $\{p_{T\mid X}^{(k)}\}_{k=0}^{\infty}$, $\{q_T^{(k)}\}_{k=0}^{\infty}$, and $\{r_{Y\mid T}^{(k)}\}_{k=0}^{\infty}$ denote the sequences generated by Algorithm \ref{alg:AO_SDT_IBM}.

In general, the objective function of the generalized IB problem is nonconvex, similar to the original IB problem; therefore, the global convergence of Algorithm \ref{alg:AO_SDT_IBM} cannot be ensured.
However, the following results establish a local convergence regarding the objective function values. 

\begin{prop}\label{prop:monotonically_decreasing}
Sequence $\{G_{\beta}(p_{T\mid X}^{(k+1)},q_T^{(k)},r_{Y\mid T}^{(k)})\}_{k=0}^{\infty}$ is monotonically nonincreasing in $k$.
\end{prop}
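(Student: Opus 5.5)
The plan is the standard block-coordinate-descent chain built from the three partial minimizations in Proposition \ref{prop:update_formulae_SDT_IBM}. Each step of Algorithm \ref{alg:AO_SDT_IBM} replaces one argument of $G_\beta$ by an exact minimizer with the other two held fixed, so each step cannot increase $G_\beta$.

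Fix $k\ge 0$. By the algorithm, $p_{T\mid X}^{(k+2)}$ is computed by \eqref{eq:update_p_T_X} from $(q_T^{(k+1)},r_{Y\mid T}^{(k+1)})$. Next, $q_T^{(k+1)}$ is the marginal \eqref{eq:update_q_T} induced by $p_{T\mid X}^{(k+1)}$, and $r_{Y\mid T}^{(k+1)}$ is the posterior \eqref{eq:update_r_Y_T} induced by $p_{T\mid X}^{(k+1)}$. I would establish the chain
\begin{align}
G_\beta(p_{T\mid X}^{(k+1)},q_T^{(k)},r_{Y\mid T}^{(k)})
&\ge G_\beta(p_{T\mid X}^{(k+1)},q_T^{(k+1)},r_{Y\mid T}^{(k)}) \notag\\
&\ge G_\beta(p_{T\mid X}^{(k+1)},q_T^{(k+1)},r_{Y\mid T}^{(k+1)}) \notag\\
&\ge G_\beta(p_{T\mid X}^{(k+2)},q_T^{(k+1)},r_{Y\mid T}^{(k+1)}).
\end{align}
The inequalities are justified as follows.
\begin{enumerate}
\item The first uses item 2 of Proposition \ref{prop:update_formulae_SDT_IBM}, since $q_T^{(k+1)}$ minimizes $G_\beta(p_{T\mid X}^{(k+1)},\cdot,r_{Y\mid T}^{(k)})$.
\item The second uses item 1: $r_{Y\mid T}^{(k+1)}$ minimizes $G_\beta(p_{T\mid X}^{(k+1)},q_T^{(k+1)},\cdot)$. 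The $r$-minimizer depends only on $p_{T\mid X}$, because the $D$ term does not involve $r$.
\item The third uses item 3 with $(q_T^{(k+1)},r_{Y\mid T}^{(k+1)})$ fixed.
\end{enumerate}
The two endpoints of the chain are exactly consecutive terms of the sequence in the statement, which gives monotonicity.

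The main point requiring care is the bookkeeping of indices. In the algorithm, $p^{(k)}$ is computed from $q^{(k)}$ and $r^{(k)}$ before those are overwritten. The indexing in the statement, $G_\beta(p^{(k+1)},q^{(k)},r^{(k)})$, reads the $p$-update as using the pair $(q^{(k)},r^{(k)})$ computed in the previous round. I would state this interpretation explicitly: $p^{(k+1)}$ is obtained from $(q^{(k)},r^{(k)})$, and $q^{(k+1)},r^{(k+1)}$ are then obtained from $p^{(k+1)}$. The chain above uses exactly this ordering.

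A secondary subtlety is $\beta=\infty$ or $\beta=0$. For these boundary values the $p$-update \eqref{eq:update_p_T_X} should be interpreted as a limit, namely a hard assignment or $q_T$ itself. The minimizing property of item 3 still holds, so the chain is unaffected.

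Finally, I would remark on what the argument does and does not give. If $\ell_{\mathcal{H}}$ is bounded below, for example when $\mathcal{H}$ is continuous on the compact simplex, then $G_\beta$ is bounded below. In that case the monotone sequence converges, which is the ``local convergence of objective values'' promised in the text. The argument does not show that the limit is the global minimum.
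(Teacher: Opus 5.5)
Your proposal is correct and follows the paper's argument. The paper's proof says only that monotonicity follows from exact alternating minimization over the blocks $(r_{Y\mid T},q_T,p_{T\mid X})$, and your chain, which applies items 2, 1, 3 of Proposition~\ref{prop:update_formulae_SDT_IBM} in that order, writes that out explicitly; your index convention, with $p_{T\mid X}^{(k+1)}$ computed from $(q_T^{(k)},r_{Y\mid T}^{(k)})$, is the one that matches the statement.
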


\begin{proof}
This follows directly from the exact alternating minimization performed by Algorithm \ref{alg:AO_SDT_IBM} over the blocks $(r_{Y\mid T},q_T,p_{T\mid X})$.
\end{proof}

\begin{prop}
Let us assume that either $\mathcal{H}(Y)$ is nonnegative and upper-bounded, or the corresponding $\text{EVPI}^{\ell_{\mathcal{H}}}(Y)$ is upper-bounded.
Then, sequence $\{G_{\beta}(p_{T\mid X}^{(k+1)},q_T^{(k)},r_{Y\mid T}^{(k)})\}_{k=0}^{\infty}$ converges.
\end{prop}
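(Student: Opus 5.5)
Since Proposition \ref{prop:monotonically_decreasing} already gives that the sequence $\{G_{\beta}(p_{T\mid X}^{(k+1)},q_T^{(k)},r_{Y\mid T}^{(k)})\}_{k}$ is monotonically nonincreasing, the plan is to invoke the monotone convergence theorem for real sequences. It therefore suffices to show that $G_{\beta}$ is bounded below, uniformly over all triples $(p_{T\mid X},q_T,r_{Y\mid T})$ and hence along the iterates. We take $\beta<\infty$; the case $\beta=\infty$ is understood as the limiting pure-utility problem and is treated separately or excluded.

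The lower bound comes from splitting $G_{\beta}$ into its two terms. The first term satisfies $D(p_X p_{T\mid X}\|p_X q_T)\ge 0$ by nonnegativity of the Kullback--Leibler divergence. For the second term we need an upper bound on $\mathcal{F}_{\mathcal{H}}(p_{T\mid Y},r_{Y\mid T})$ that does not depend on $r_{Y\mid T}$. By Proposition \ref{prop:vc_H_MI}, for every $r_{Y\mid T}$,
\begin{align}
\mathcal{F}_{\mathcal{H}}(p_{T\mid Y},r_{Y\mid T})\le \max_{r'_{Y\mid T}}\mathcal{F}_{\mathcal{H}}(p_{T\mid Y},r'_{Y\mid T}) = I_{\mathcal{H}}(Y;T)=\text{EVSI}^{\ell_{\mathcal{H}}}(Y;T).
\end{align}
Combining the two bounds gives $G_{\beta}\ge -\beta\, I_{\mathcal{H}}(Y;T)$, so it remains to bound $I_{\mathcal{H}}(Y;T)$ uniformly in $p_{T\mid X}$.

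This uniform bound is the only step needing care, and the two hypotheses give it in two ways. Under the second hypothesis, Proposition \ref{prop:evsi_leq_evpi} yields $\text{EVSI}^{\ell_{\mathcal{H}}}(Y;T)\le \text{EVPI}^{\ell_{\mathcal{H}}}(Y)$. The right-hand side depends only on $p_Y$, which is fixed, so the assumed upper bound gives the required constant. Under the first hypothesis, the \texttt{AVG} condition gives $\mathcal{H}(Y\mid T)=\vE_T[\mathcal{H}(p_{Y\mid T}(\cdot\mid T))]\ge 0$ because $\mathcal{H}\ge 0$ pointwise on $\Delta_{\mathcal{Y}}$. Hence $I_{\mathcal{H}}(Y;T)\le \mathcal{H}(Y)\le \sup_{p\in\Delta_{\mathcal{Y}}}\mathcal{H}(p)<\infty$. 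In both cases $G_{\beta}\ge -\beta C$ for a finite constant $C$.

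Finally, a nonincreasing sequence that is bounded below converges, which proves the claim. The main subtlety is whether $\mathcal{F}_{\mathcal{H}}$ could fail to be bounded above because the loss $\ell_{\mathcal{H}}$ takes negative values. The variational identity in Proposition \ref{prop:vc_H_MI} handles this, since it bounds $\mathcal{F}_{\mathcal{H}}$ above by its maximum for every $r_{Y\mid T}$, not only at the iterates.
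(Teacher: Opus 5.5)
Your proposal is correct and matches the paper's proof: you combine monotonicity from Proposition~\ref{prop:monotonically_decreasing} with the uniform lower bound $G_{\beta}\ge -\beta I_{\mathcal{H}}(Y;T)$, and you bound $I_{\mathcal{H}}(Y;T)$ by $\mathcal{H}(Y)$ in the first case and by $\text{EVPI}^{\ell_{\mathcal{H}}}(Y)$ via Proposition~\ref{prop:evsi_leq_evpi} in the second. The differences are cosmetic: you bound the two terms of $G_{\beta}$ separately rather than first minimizing jointly over $(q_T,r_{Y\mid T})$, and you spell out the \texttt{AVG}-based step $\mathcal{H}(Y\mid T)\ge 0$ that the paper leaves implicit.
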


\begin{proof}
According to Proposition \ref{prop:monotonically_decreasing}, it suffices to show that $G_{\beta}(p_{T\mid X},q_T,r_{Y\mid T})$ is lower-bounded.
Note that
\begin{align}
G_{\beta}(p_{T\mid X},q_T,r_{Y\mid T})
&\ge \min_{q_T}\min_{r_{Y\mid T}} G_{\beta}(p_{T\mid X},q_T,r_{Y\mid T}) \notag\\
&= I(p_X,p_{T\mid X}) - \beta I_{\mathcal{H}}(p_Y,p_{T\mid Y}) \notag\\
&\ge -\beta I_{\mathcal{H}}(p_Y,p_{T\mid Y}).
\end{align}
If $\mathcal{H}(Y)$ is nonnegative and upper-bounded, then
$I_{\mathcal{H}}(p_Y,p_{T\mid Y})\le \mathcal{H}(Y)<\infty$.
If $\text{EVPI}^{\ell_{\mathcal{H}}}(Y)$ is upper-bounded, then, according to Proposition \ref{prop:evsi_leq_evpi},
\begin{align}
I_{\mathcal{H}}(p_Y,p_{T\mid Y})
=\text{EVSI}^{\ell_{\mathcal{H}}}(Y;T)
\le \text{EVPI}^{\ell_{\mathcal{H}}}(Y)
<\infty.
\end{align}
This proves the statement.
\end{proof}

\section{Numerical Examples}\label{sec:numerical_example}

In this section, we present some numerical examples of Algorithm \ref{alg:AO_SDT_IBM}.
Let $\mathcal{X}=\mathcal{Y}=\{1,2,3\}$ and $\mathcal{T}=\{1,2\}$.
We consider the following joint distribution of $(X,Y)$:
\begin{align}
p_{X,Y}=
\begin{bmatrix}
0.02286551 & 0.06322060 & 0.21391389 \\
0.20989825 & 0.03393804 & 0.15616371 \\
0.10464454 & 0.03489356 & 0.16046191
\end{bmatrix},
\end{align}
where the $(i,j)$-th entry corresponds to $p_{X,Y}(i,j)$.

Figure \ref{fig:SDT_IBM} shows the tradeoff curves of the generalized IB problem
for $\beta\in(0,300]$ when adopting the Shannon MI $I(Y;T)$ (blue dots: original IB curve) and
$\mathrm{EVSI}^{\ell_{\mathrm{sq}}}(Y;T)$ (red dots) as the $\mathcal{H}$-MI. 
The stopping threshold was set to $\epsilon=10^{-9}$.
The initial distribution $p_{T\mid X}^{(0)}$ was generated by drawing each entry independently from the continuous uniform distribution on $[0,1]$ and normalizing each row so that it sums to one.
Specifically, we set $p_{T\mid X}^{(0)}$ as follows:
\begin{align}
p_{T\mid X}^{(0)}=
\begin{bmatrix}
0.46707838 & 0.53292162 \\
0.89856339 & 0.10143661 \\
0.45165810 & 0.54834190
\end{bmatrix},
\end{align}
where the $(i,j)$-th entry corresponds to $p_{T\mid X}^{(0)}(j\mid i)$.

In general, the distribution $p_{T\mid X}^{(N)}$ obtained at convergence depends on the choice of the utility measure $I_{\mathcal{H}}(Y;T)$,
where $N$ denotes the number of iterations required to satisfy the stopping criterion.
For example, at $I(X; T)\approx 0.2496$, the resulting distributions are described below.
When using the Shannon MI $I(Y; T)$ as the utility measure, the algorithm converges after $108$ iterations, yielding
\begin{align}
p_{T\mid X}^{(108)}=
\begin{bmatrix}
0.20248578 & 0.79751422 \\
0.95825068 & 0.04174932 \\
0.77937897 & 0.22062103
\end{bmatrix}.
\end{align}
By adopting $\mathrm{EVSI}^{\ell_{\mathrm{sq}}}(Y;T)$ as the utility measure, the algorithm converges after $81$ iterations, yielding 
\begin{align}
p_{T\mid X}^{(81)}=
\begin{bmatrix}
0.33757386 & 0.66242614 \\
0.99267856 & 0.00732144 \\
0.93976497 & 0.06023503
\end{bmatrix}.
\end{align}

\begin{figure}[tb]
\centering
\includegraphics[width=0.5\textwidth]{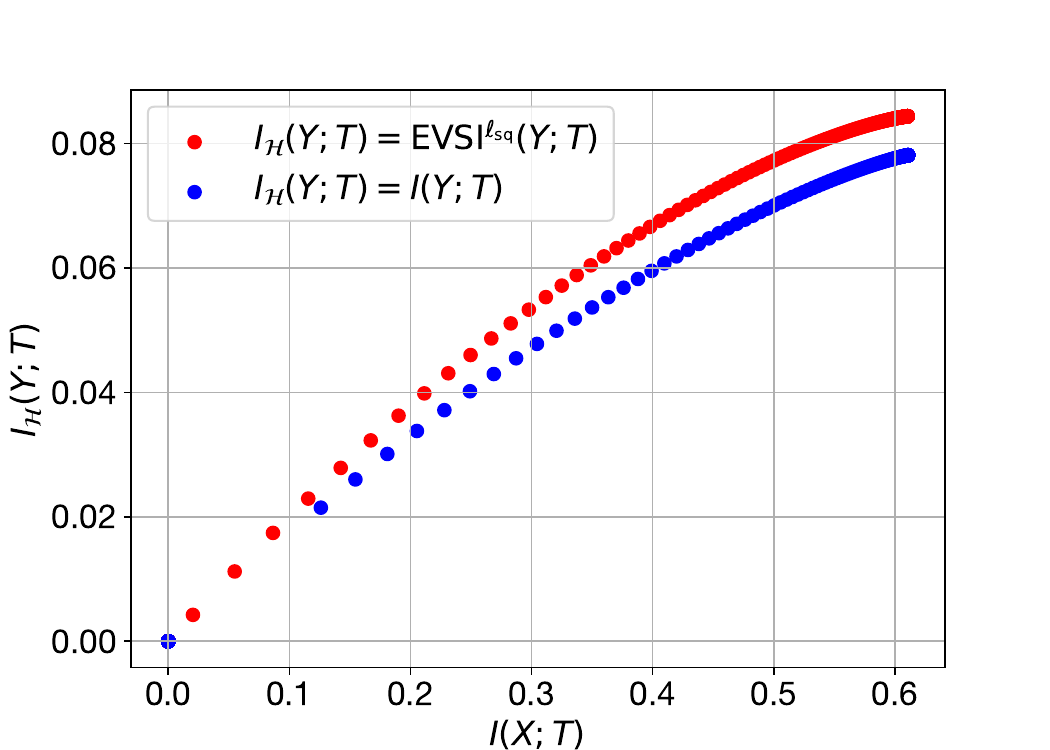}
\caption{
Tradeoff curves of the generalized IB problem when adopting the Shannon MI $I(Y;T)$ (blue dots) and $\mathrm{EVSI}^{\ell_{\mathrm{sq}}}(Y;T)$ (red dots) as utility measures.
}
\label{fig:SDT_IBM}
\end{figure}

\section{Conclusion}\label{sec:conclusion}
In this study, we investigated a generalized IB problem, where the utility of the compressed representation of data is evaluated using an $\mathcal{H}$-MI that satisfies the \texttt{CV} and \texttt{AVG} conditions; the compression measure is the Shannon MI.
By exploiting the decision-theoretic interpretation of this class of $\mathcal{H}$-MI via its EVSI, we derived an alternating optimization algorithm to calculate the tradeoff between compression and utility.
This perspective provides a unified decision-theoretic understanding of generalized IB problems.


\appendices
\section{Proof of Proposition \ref{prop:update_formulae_SDT_IBM}} \label{proof:update_formulae_SDT_IBM}

\begin{proof}
\noindent
$1$. For a fixed $(p_{T\mid X},q_T)$, minimizing $G_{\beta}$ over $r_{Y\mid T}$ is equivalent to maximizing $\mathcal{F}_{\mathcal{H}}(p_{T\mid Y},r_{Y\mid T})$, which is defined in \eqref{eq:vc_H_MI}.
According to Remark \ref{rem:optimal_dist_psr}, $\mathcal{F}_{\mathcal{H}}(p_{T\mid Y}, r_{Y\mid T})$ is maximized when $r_{Y\mid T}(\cdot | t)=p_{Y\mid T}(\cdot | t)$.

\noindent
$2$.  For a fixed $(p_{T\mid X},r_{Y\mid T})$, minimizing $G_{\beta}$ over $q_T$ is equivalent to minimizing $D(p_X p_{T\mid X}\|p_X q_T)$.
Hence, according to Proposition \ref{prop:vc_Shannon_MI}, the minimum is achieved at $q_T=p_T$.

\noindent
$3$. Although the update rule \eqref{eq:update_p_T_X} can be derived from the Karush--Kuhn--Tucker conditions, here we provide a proof using the nonnegativity of the Kullback--Leibler divergence.
\begin{align}
&G_{\beta}(p_{T\mid X},q_T,r_{Y\mid T}) \notag \\ 
&:= D(p_X p_{T\mid X}\|p_X q_T)
-\beta\,\mathcal{F}_{\mathcal{H}}(p_{T\mid Y},r_{Y\mid T}) \notag\\
&= D(p_X p_{T\mid X}\|p_X q_T)
+ \beta\vE_{Y,T}\left[\ell_{\mathcal{H}}\bigl(Y,r_{Y\mid T}(\cdot\mid T)\bigr)\right] \notag \\ 
&\qquad \qquad \qquad \qquad \quad -\beta\,\mathcal{H}(p_Y) \notag\\
&= \vE_X\left[D\left(p_{T\mid X}(\cdot\mid X)\,\|\,p_{T\mid X}^{*}(\cdot\mid X)\right)\right] \notag \\ 
&\qquad + \sum_{x}p_X(x)\log Z_{\beta}(q_T,r_{Y\mid T},x) -\beta\,\mathcal{H}(p_Y) \notag\\
&\overset{(a)}{\geq}
\sum_x p_X(x)\log Z_{\beta}(q_T,r_{Y\mid T},x) -\beta \mathcal{H}(p_Y), \label{eq:lb_p_T_X}
\end{align}
where $p^{*}_{T\mid X}$ is defined in \eqref{eq:update_p_T_X}, 
\begin{align}
&Z_{\beta}(q_T,r_{Y\mid T},x) \notag \\ 
&:=\sum_t q_T(t)\exp\left\{-\beta \sum_y p_{Y\mid X}(y\mid x)\,
\ell_{\mathcal{H}}\bigl(y,r_{Y\mid T}(\cdot\mid t)\bigr)\right\},
\end{align}
and $(a)$ follows from the nonnegativity of the Kullback--Leibler divergence.
The equality in \eqref{eq:lb_p_T_X} holds if and only if $p_{T\mid X}=p_{T\mid X}^{*}$ (see, e.g., \cite[Thm. 2.6.3]{Cover:2006:EIT:1146355}).
\end{proof}

\end{document}